\newtheorem{definition}{Definition}
\newtheorem{lemma}{Lemma}
\begin{document}

\begin{frontmatter}

\title{Recursive Estimation of Orientation \\ Based on the Bingham Distribution}
\tnotetext[t1]{Draft submitted to 16th International Conference on Information FUSION on March 15, 2013}

\author[isas]{Gerhard~Kurz}
\ead{gerhard.kurz@kit.edu}

\author[isas]{Igor~Gilitschenski}
\ead{gilitschenski@kit.edu}

\author[ucl]{Simon~Julier}
\ead{s.julier@cs.ucl.ac.uk}

\author[isas]{Uwe~D.~Hanebeck}
\ead{uwe.hanebeck@ieee.org}

\address[isas]{Intelligent Sensor-Actuator-Systems Laboratory (ISAS)\\
Institute for Anthropomatics\\
Karlsruhe Institute of Technology (KIT), Germany\vspace{3mm}}

\address[ucl]{Virtual Environments and Computer Graphics Group\\
Department of Computer Science\\
University College London (UCL), United Kingdom}

\begin{abstract}
Directional estimation is a common problem in many tracking applications. Traditional filters such as the Kalman filter perform poorly because they fail to take the periodic nature of the problem into account. We present a recursive filter for directional data based on the Bingham distribution in two dimensions. The proposed filter can be applied to circular filtering problems with 180 degree symmetry, i.e., rotations by 180 degrees cannot be distinguished. It is easily implemented using standard numerical techniques and suitable for real-time applications. The presented approach is extensible to quaternions, which allow tracking arbitrary three-dimensional orientations. We evaluate our filter in a challenging scenario and compare it to a traditional Kalman filtering approach.
\end{abstract}

\end{frontmatter}

\section{Introduction} \label{sec:introduction}
\noindent
Many estimation problems involve the task of estimating angular values. These problems include, but are not limited to, estimating the pose or orientation of objects. For example, tracking cars, ships, or airplanes may involve estimation of their current orientation or heading. Furthermore, many applications in the area of robotics or augmented reality depend on reliable estimation of the pose of certain objects. When estimating the orientation of two-way roads or relative angles of two unlabeled targets, the estimation task reduces to estimating an axis. This can be thought of as estimation of a directionless orientation or estimation with $180^\circ$ symmetry. All these estimation problems share the need for processing angular or directional data, which differs in many ways from the classical Euclidean setting. First, periodicity needs to be taken into account. This is especially important for measurement updates around $0$, respectively $2\pi$. Second, directional quantities do not lie in a vector space. Thus, there is no equivalent to a classical linear model, as there are no linear mappings. 

In many current applications, even simple estimation problems involving angular data are often considered as traditional linear or nonlinear estimation problems and handled with classical techniques such as the Kalman Filter \cite{kalman1960}, the extended Kalman Filter (EKF), or the unscented Kalman Filter (UKF) \cite{julier2004}. In a circular setting, most traditional approaches to filtering suffer from assuming a Gaussian probability density at a certain point. They fail to take into account the periodic nature of the problem and assume a linear vector space instead of a curved manifold. This shortcoming can cause poor results, in particular when the angular uncertainty is large. In certain cases, the filter may even diverge. 

Classical strategies to avoid these problems in an angular setting involve an ``intelligent'' repositioning of measurements or even discarding certain undesired measurements. Sometimes, nonlinear equality constraints have to be fulfilled, for example unit length of a vector, which makes it necessary to inflate the covariance \cite{julier2007}. There are also approaches, that use operators on a manifold to provide a local approximation of a vector space \cite{Hertzberg2013}. While these approaches yield feasible results, they still suffer from ignoring the true geometry of circular data within their probabilistic models, which are usually based on assuming a normally distributed noise. This assumption is often motivated by the Central Limit Theorem, i.e., the limit distribution of a normalized sum of i.i.d. random variables with finite variance is normally distributed \cite{shiryaev1995}. For angular data, this is not the case. Choosing a circular distribution for describing uncertainty offers possibly better results. 

In this paper, we consider the use of the Bingham distribution \cite{bingham1974} for recursive estimation of orientation. The Bingham distribution is defined on the hypersphere of arbitrary dimension, so it can be applied to problems of different dimensionality. Here, we focus on the two-dimensional case and apply our results to axis estimation. To the best of our knowledge, this is the first published attempt to create a recursive filter based on the Bingham distribution. 

The presented methods can also be applied to the four-dimensional case, which would allow the representation of unit quaternions. Unit quaternions could then be used to estimate the full 3D orientation of an object. It is well known that Quaternions avoid the singularities present in other representations such as Euler angles. Their only downsides are the fact that they must remain normalized and the property that there are two quaternions for every rotation ($q$ and $-q$). Both of these issues can elegantly be overcome by use of the Bingham distribution, since it is by definition restricted to the hypersphere and is $180^\circ$ symmetric.


This paper is structured as follows. First, we present an overview of previous work in the area of directional statistics and angular estimation (Sec. \ref{sec:relatedwork}). Then, we introduce our key idea in Sec. \ref{sec:keyidea}. In Sec. \ref{sec:binghamdistribution}, we give a detailed introduction to the Bingham distribution and we derive the necessary operations, which we will need to create a recursive Bingham filter. Based on these prerequisites, we introduce our filter in Sec. \ref{sec:filter}. We have carried out an evaluation in simulations, which is presented in Sec. \ref{sec:evaluation}. Finally, we conclude this work in Sec. \ref{sec:conclusion}.


\section{Related Work} \label{sec:relatedwork}
\noindent
Directional statistics is a subdiscipline of statistics, which focuses on dealing with directional data. Classical results in directional statistics are summed up in the books by Mardia and Jupp \cite{mardia1999} and by Jammalamadaka and Sengupta \cite{jammalamadaka2001}. Directional statistics differs from traditional statistics by the fact that random variables located on manifolds (for example the circle or the sphere) are considered rather than random variables located in vector spaces (typically $\mathbb{R}^d$).

There is a broad range of research for investigating the 2D orientation, e.\,g., the work by Krindis et al. \cite{krinidis2006}. A recursive filter based on the von Mises distribution for estimating the orientation on the $SO(2)$ was presented in \cite{azmani2009}. Later, a nonlinear filter based on von Mises and wrapped normal distributions was presented in \cite{kurz2013}.

In 1974, Bingham proposed his distribution in \cite{bingham1974}. Further work on the Bingham distribution has been done by Kent \cite{kent1987} as well as Jupp and Mardia \cite{jupp1979}. So far, there have only been a few applications of the Bingham distribution, for example in geology \cite{kunze2004}. In 2011, Glover used the Bingham distribution for a Monte Carlo based pose estimation \cite{glover2011}. Glover also released a library called libbingham \cite{Glover13} that includes implementations of some of the methods discussed in Sec. \ref{sec:binghamdistribution}. It should be noted that our implementation is not based on libbingham.



\section{Key Idea of the Bingham Filter} \label{sec:keyidea}
\noindent
The goal of this paper is the derivation of a recursive filter based on the Bingham distribution. Rather than relying on the traditional Gaussian distribution, we chose to represent all occurring probability densities as Bingham. The Bingham distribution is defined on the hypersphere and is antipodally symmetric, which makes it interesting for applications in angular estimation with inherent $180^\circ$ symmetry and for problems where $180^\circ$ symmetry occurs as a result of parameterization, e.g., in the case of quaternions. Although we restrict ourselves to the two-dimensional case in this paper, we would like to emphasize that most of presented methods are easily generalized to higher dimensions.

In order to derive a recursive filter, we need to be able to perform two operations. First, we need to calculate the predicted state at the next time step from the current state and the system noise affecting the state. In a traditional estimation problem in $\mathbb{R}^d$ with additive noise, this involves a convolution with the noise density. We provide a suitable analogue on the hypersphere, which we call \emph{composition}. Since Bingham distributions are not closed under compositions, we present an approximate solution to this problem, which is based on matching covariance matrices. 

Second, we need to perform a Bayes update. As usual, this requires the \emph{multiplication} of the prior density with the likelihood density. We prove that Bingham distributions are closed under multiplication and show how to obtain the posterior density.

\section{Bingham Distribution} \label{sec:binghamdistribution}
\noindent
The Bingham distribution appears naturally when a $d$-dimensional normal random vector ${\bf \vec{x}}$ with $\E({\bf \vec{x}})=\vec{0}$ is conditioned on $||{\bf \vec{x}}||=1$ \cite{kume2005}. In the following, we will introduce the Bingham distribution and derive the formulas for multiplication of two Bingham probability density functions. Furthermore, we will present a method for computing the composition of two Bingham-distributed random variables, which is analogous to the addition of real random variables.

\subsection{Probability Density Function}
\begin{definition}
Let $S_{d-1} = \{ \vec{x} \in \mathbb{R}^d : ||\vec{x}||=1 \} \subset \mathbb{R}^d$ be the unit hypersphere in $\mathbb{R}^d$. The probability density function (pdf) 
\[ f: S_{d-1} \to \mathbb{R} \]
of a Bingham  distribution \cite{bingham1974} is given by
\[ f(\vec{x}) = \frac{1}{F} \cdot \exp ( \vec{x}^T \mat{M}\, \mat{Z}\, \mat{M}^T \vec{x} )\ ,  \]
where $\mat{M} \in \mathbb{R}^{d \times d}$ is an orthogonal matrix ($\mat{M}\,\mat{M}^T = \mat{M}^T\,\mat{M} = \mat{I}_{d \times d}$) describing the orientation, $\mat{Z} = \text{diag} (z_1, \dots z_{d-1}, 0) \in \mathbb{R}^{d \times d}$ with $z_1 \leq \dots \leq z_{d-1} \leq 0$ is the concentration matrix, and $F$ is a normalization constant. 
\end{definition}
As Bingham showed, adding a multiple of the identity matrix $ \mat{I}_{d \times d}$ to $\mat{Z}$ does not change the distribution. Thus, we conveniently force the last entry of \mat{Z} to be zero. Because it is possible to swap columns of $\mat{M}$ and the according diagonal entries in $\mat{Z}$ without changing the distribution, we can enforce $z_1 \leq \dots \leq z_{d-1}$. This representation allows us to obtain the mode of the distribution very easily by taking the last column of $\mat{M}$.

The pdf is antipodally symmetric, i.\,e., $f(\vec{x}) = f(- \vec{x})$ holds for all $x \in S_{d-1}$. Consequently, the Bingham distribution is invariant to rotations by $180^\circ$. Examples of the pdf for two dimensions ($d=2$) are shown in Fig. \ref{fig:pdf3d} and Fig. \ref{fig:pdf2d}. The Bingham distribution is very similar to a Gaussian if and only if the uncertainty is small. This can be seen in Fig. \ref{fig:kld}, which shows the Kullback-Leibler divergence between a Bingham pdf and a corresponding Gaussian pdf.  

\begin{figure}[t]
\centering
\includegraphics[width=0.6\textwidth]{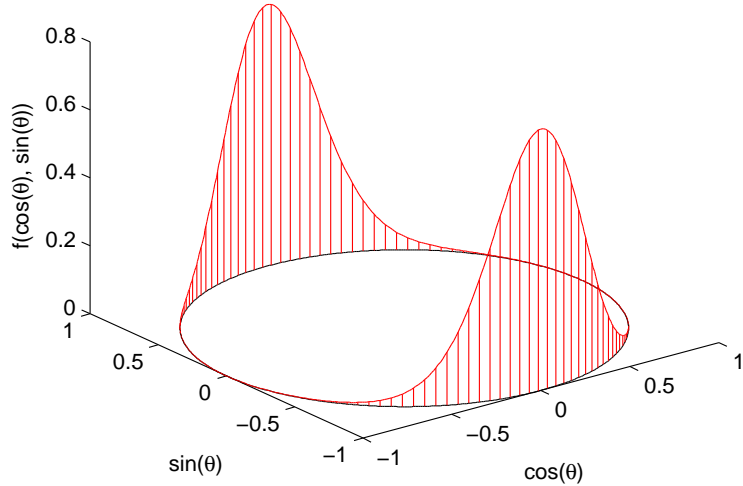}
\caption{Bingham pdf with  $\mat{M}=\mat{I}_{2 \times 2}$ and $\mat{Z} = \diag(-8, 0)$ as a 3D plot. This corresponds to a standard deviation of $16^\circ$.}
\label{fig:pdf3d}
\end{figure}

\begin{figure}[t]
\centering
\includegraphics[width=0.6\textwidth]{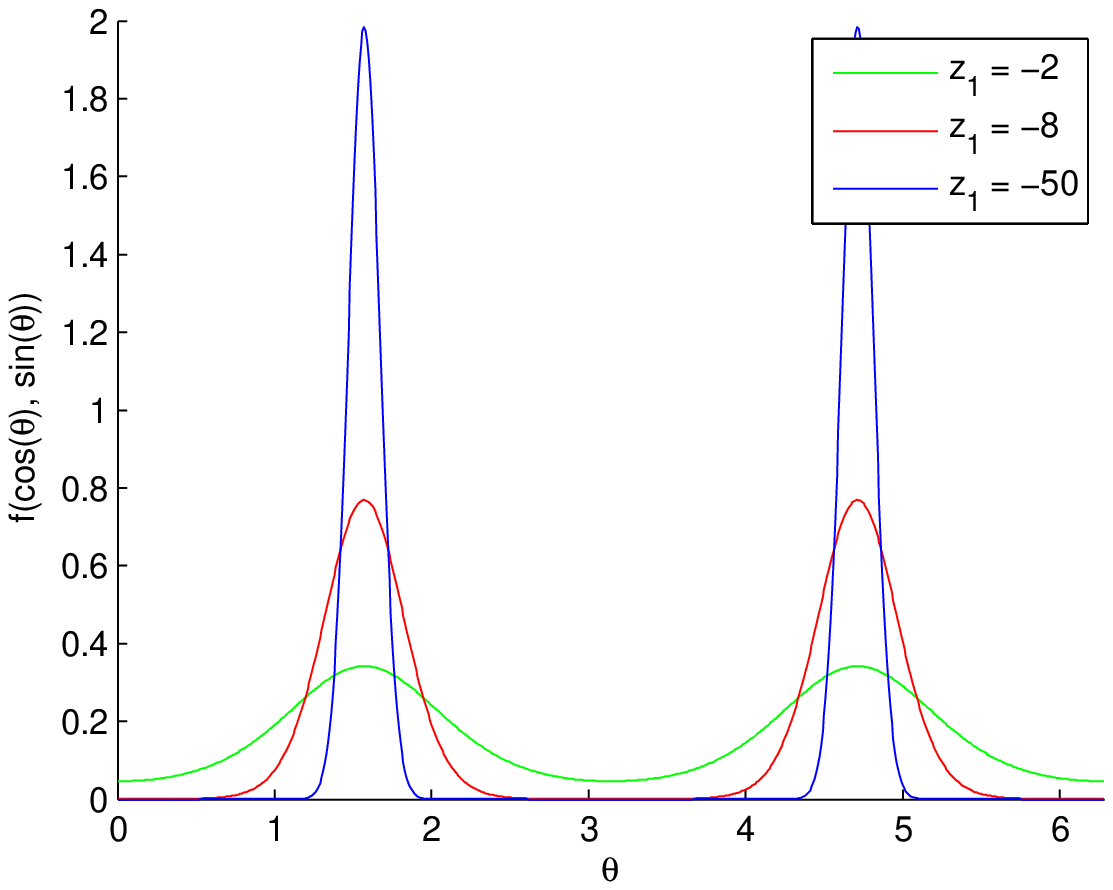}
\caption{Bingham pdf with $\mat{M}=\mat{I}_{2 \times 2}$ for different values of $\mat{Z} = \diag (z_1, 0)$ and $x=(\cos(\theta), \sin(\theta))^T$. These values for $z_1$ correspond to standard deviations of approximately $36^\circ$, $16^\circ$, and $6^\circ$ respectively.}
\label{fig:pdf2d}
\end{figure}

\begin{figure}[t]
\centering
\includegraphics[width=0.6\textwidth]{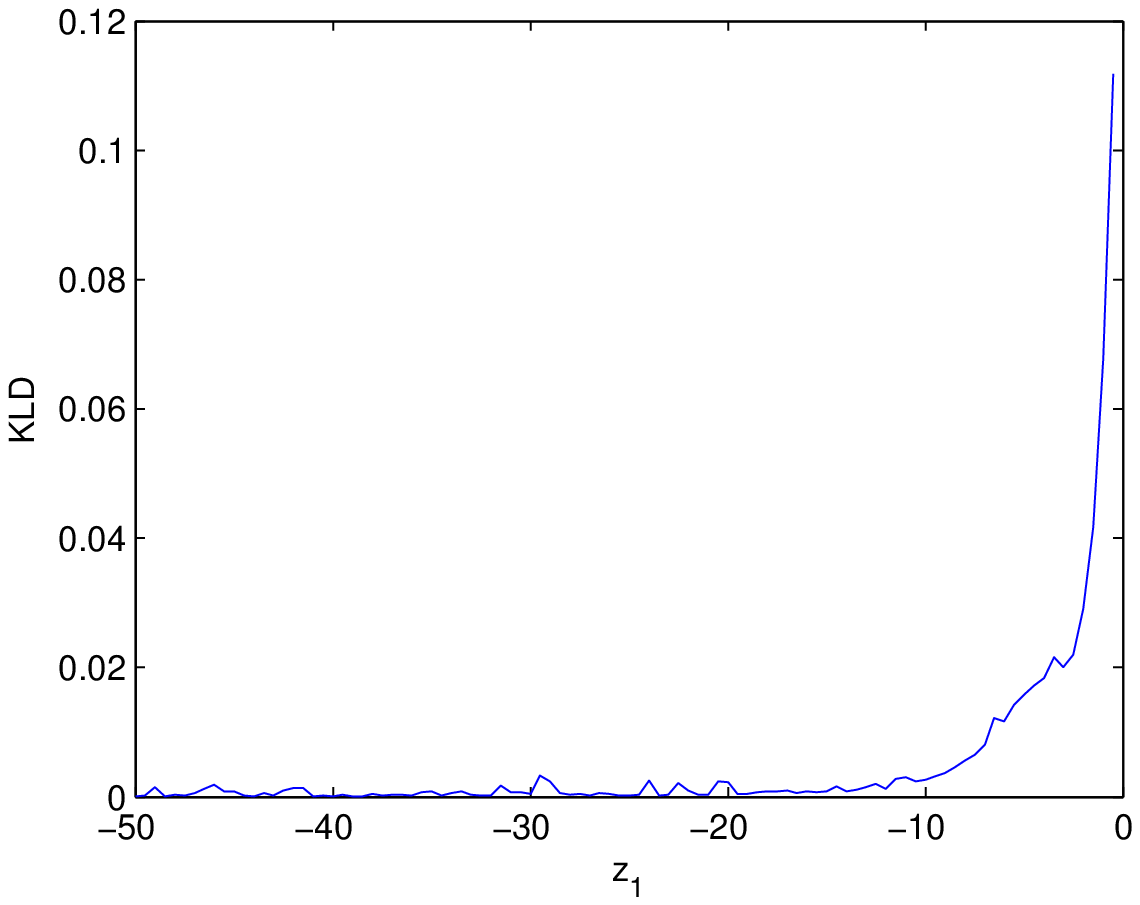}
\caption{Kullback-Leibler divergence on the interval $[0, \pi]$ between a Bingham pdf with $\mat{M}=\mat{I}_{2 \times 2}$, $\mat{Z} = \diag (z_1, 0)$ and a Gaussian pdf with equal mode and standard deviation. For small uncertainties ($z_1 < - 15$, which corresponds to a standard deviation of about $11^\circ$), the Gaussian and Bingham distributions are almost indistinguishable. However, for large uncertainties, the Gaussian approximation becomes quite poor.}
\label{fig:kld}
\end{figure}


\subsection{Normalization Constant}
The normalization constant can be calculated with the help of the hypergeometric function of a matrix argument \cite{herz1955,koev2006,muirhead1982}. It is given by
\[ F := |S_{d-1}| \cdot {}_1F_1 \left( \frac{1}{2}, \frac{d}{2}, \mat{Z} \right)\ , \]
where $|S_{d-1}|$ is the surface area of the $d$-sphere and ${}_1F_1(\cdot,\cdot,\cdot)$ is the hypergeometric function of matrix argument. In the two-dimensional case ($d=2$), this reduces to
\[ F = 2 \pi \cdot {}_1F_1 \left( \frac{1}{2}, 1, \begin{pmatrix} z_1 & 0 \\ 0 & 0 \end{pmatrix} \right) = 2 \pi \cdot {}_1F_1 \left(\frac{1}{2}, 1, z_1 \right)\ , \]
so it is sufficient to compute the hypergeometric function of a scalar argument, which is described in \cite{abramowitz1964}.

\subsection{Multiplication}
For two given Bingham densities, we want to obtain their product. This product is used for Bayesian inference involving Bingham distributions. The result presented below yields a convenient way to calculate the product of Bingham distributions.
\begin{lemma}
Bingham distributions are closed under multiplication with renormalization.
\end{lemma}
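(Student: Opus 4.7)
The plan is to compute the product explicitly and show it can be rewritten in the canonical Bingham form of the Definition, with a proper orthogonal matrix, a diagonal concentration matrix whose last entry is zero, and sorted entries. Only the last steps are bookkeeping; the substantive content is the spectral theorem applied to a symmetric matrix.

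First, I would take two Bingham densities $f_1$ and $f_2$ with parameters $(\mat{M}_1,\mat{Z}_1,F_1)$ and $(\mat{M}_2,\mat{Z}_2,F_2)$ and form the pointwise product. Since the exponents of the two exponentials add, the product density, restricted to $S_{d-1}$, is proportional to $\exp\bigl(\vec{x}^T \mat{A}\, \vec{x}\bigr)$, where $\mat{A} = \mat{M}_1\mat{Z}_1\mat{M}_1^T + \mat{M}_2\mat{Z}_2\mat{M}_2^T$. The matrix $\mat{A}$ is a sum of two symmetric matrices, hence symmetric.

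Second, I would invoke the spectral theorem to write $\mat{A} = \mat{M}\,\mat{D}\,\mat{M}^T$ with $\mat{M}$ orthogonal and $\mat{D}$ diagonal. This already delivers the required quadratic form $\vec{x}^T \mat{M}\,\mat{D}\,\mat{M}^T \vec{x}$. To match the normal form stipulated in the Definition, I would (i) subtract $d_{\max}\cdot \mat{I}_{d\times d}$ from $\mat{D}$, where $d_{\max}$ is the largest diagonal entry; this is permissible because $\vec{x}^T\vec{x}=1$ on $S_{d-1}$, so adding a multiple of the identity to the concentration matrix only rescales the density and is absorbed into normalization, as already noted after the Definition; (ii) permute the columns of $\mat{M}$ (together with the corresponding diagonal entries of $\mat{D}$) so that the remaining diagonal entries satisfy $z_1\le\dots\le z_{d-1}\le 0$, which is likewise noted to leave the distribution unchanged.

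Third, I would renormalize: since the resulting integrand is positive and $S_{d-1}$ is compact, the integral is finite and strictly positive, so dividing by it gives a valid Bingham pdf with parameters $(\mat{M},\mat{Z},F)$ obtained from $\mat{A}$ as above. I expect no real obstacle here; the only thing to be careful about is that the same orthogonal transformation is used on both sides of $\mat{D}$, and that the shift by a multiple of the identity is justified on the sphere rather than in $\mathbb{R}^d$. The new normalization constant $F$ can in principle be computed from the hypergeometric function of matrix argument as in the previous subsection, but for the closure statement this is not needed.
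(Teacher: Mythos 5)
Your proposal is correct and follows essentially the same route as the paper's proof: sum the exponent matrices, diagonalize the resulting symmetric matrix via the spectral theorem, shift by the largest eigenvalue and sort to reach the canonical form, then renormalize. The extra care you take in justifying the identity shift on the sphere and the finiteness of the normalization constant is sound but does not change the argument.
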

\begin{proof}
Consider two Bingham distributions
\[ f_1(\vec{x}) = F_1 \cdot \exp ( \vec{x}^T \mat{M}_1\, \mat{Z}_1\, \mat{M}_1^T \vec{x} )  \]
and 
\[ f_2(\vec{x}) = F_2 \cdot \exp ( \vec{x}^T \mat{M}_2\, \mat{Z}_2\, \mat{M}_2^T \vec{x} ) \ . \] 
Then 
\begin{align*} 
f_1(\vec{x}) \cdot f_2(\vec{x}) 
&= F_1 F_2 \cdot \exp (\vec{x}^T ( \underbrace{\mat{M}_1 \mat{Z}_1 \mat{M}_1^T + \mat{M}_2 \mat{Z}_2 \mat{M}_2^T)}_{ =: \mat{C}} \vec{x}) \\
&\propto F \cdot \exp ( \vec{x}^T \mat{M}\, \mat{Z}\, \mat{M}^T \vec{x} )
\end{align*}
with $F$ as the new normalization constant after renormalization, $\mat{M}$ are the unit eigenvectors of $\mat{C}$, $\mat{D}$ has the eigenvalues of $\mat{C}$ on the diagonal (sorted in ascending order) and $\mat{Z} = \mat{D} - \mat{D}_{dd}\mat{I}_{d \times d}$ where $ \mat{D}_{dd}$ refers to the bottom right entry of $\mat{D}$, i.\,e., the largest eigenvalue.
\end{proof}

\subsection{Estimation of Parameters}
Estimating parameters for the Bingham distribution is not only motivated by the need to estimate noise parameters from samples. It also plays a crucial role in the prediction process when computing the composition of two Bingham random vectors. This procedure is based on matching covariance matrix. Be aware that although the Bingham distribution is only defined on $S_{d-1}$, we can still compute its covariance in $\mathbb{R}^d$. Thus, we will present both the computation of the covariance matrix of a Bingham distributed random vector and the computation of parameters for a Bingham distribution with a given covariance (which could originate from an arbitrary distribution on the hypersphere).


The maximum likelihood estimate for the parameters $(\mat{M},\mat{Z})$ of a Bingham distribution can be obtained as described in \cite{bingham1974}. $\mat{M}$ can be obtained as the matrix of eigenvectors of the covariance $\mat{S}$ with eigenvalues $\omega_1 \leq \omega_2$. In other words, $\mat{M}$ can be found as the eigendecomposition of $\mat{S} = \mat{M} \cdot \diag (\omega_1, \omega_2) \cdot \mat{M}^T$. To calculate $\mat{Z}$, the equations
\[ \frac{\frac{\partial}{\partial z_i} {}_1F_1 \left( \frac{1}{2}, 1, \begin{pmatrix} z_1 & 0 \\ 0 & z_2 \end{pmatrix} \right)}{{}_1F_1 (\frac{1}{2}, 1, z_1 )}  = \omega_i, \quad i=1,2 \]
have to be solved under the constraint $z_2=0$, which is justified by the argumentation above and used to simplify the computation.  This operation is performed numerically.

Conversely, for a given a Bingham distribution $(\mat{M},\mat{Z})$, the covariance matrix can be calculated according to 
\begin{align*} 
\mat{S} &=  \mat{M} \cdot \diag (\omega_1, \omega_2) \cdot \mat{M}^T \\
&=  \mat{M} \cdot \diag \left( \frac{1}{F} \frac{\partial F}{\partial z_1}, \frac{1}{F} \frac{\partial F}{\partial z_2} \right) \cdot \mat{M}^T  \\
&=\sum_{i=1}^2 \frac{1}{F} \frac{\partial F}{\partial z_i} \mat{M}(:,i) \mat{M}(:,i)^T \ ,
\end{align*}
where $\mat{M}(:,i)$ refers to the $i$-th column of $\mat{M}$ \cite{Glover13}. Thus, any Bingham distribution is uniquely defined by its covariance matrix and vice versa. The following Lemma simplifies the computation of partial derivatives of a confluent hypergeometric function of a $2\times 2$ matrix argument, which is used in computation of the covariance matrix as derived above.

\begin{lemma}
\label{lemma:partialderivatives}
For $d=2$, the partial derivatives
\[  \frac{\partial}{\partial z_i} {}_1F_1 \left( \frac{1}{2}, 1, \begin{pmatrix} z_1 & 0 \\ 0 & z_2 \end{pmatrix}\right) \ , \quad i=1,2 \]
can be reduced to hypergeometric functions of scalar argument.
\end{lemma}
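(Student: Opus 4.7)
The plan is to derive an integral representation for the matrix-argument hypergeometric function ${}_1F_1(1/2, 1, \mat{Z})$ in the two-dimensional case, differentiate under the integral sign, and use the identity $\cos^2\theta + \sin^2\theta = 1$ to reduce everything to a scalar argument.

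First, I would combine the definition of the Bingham density with the normalization constant $F = 2\pi \cdot {}_1F_1(1/2, 1, \mat{Z})$ stated earlier. Since the density integrates to $1$, substituting $\vec{x} = (\cos\theta, \sin\theta)^T$ and $\mat{Z} = \diag(z_1, z_2)$ yields
\[
{}_1F_1\!\left(\tfrac{1}{2}, 1, \diag(z_1, z_2)\right) = \frac{1}{2\pi} \int_0^{2\pi} \exp\!\left(z_1 \cos^2\theta + z_2 \sin^2\theta\right) d\theta.
\]
Exploiting $\cos^2\theta + \sin^2\theta = 1$, I factor out $e^{z_2}$ from the integrand, rewriting it as $e^{z_2}\exp\!\left((z_1 - z_2)\cos^2\theta\right)$. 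This reveals the key identity
\[
{}_1F_1\!\left(\tfrac{1}{2}, 1, \diag(z_1, z_2)\right) = e^{z_2} \cdot {}_1F_1\!\left(\tfrac{1}{2}, 1, z_1 - z_2\right),
\]
reducing the matrix-argument function to a scalar-argument one.

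Next I would differentiate both sides with respect to $z_1$ and $z_2$. Exchanging derivative and integral is justified by smoothness of the integrand and compactness of $[0, 2\pi]$. For $z_1$, the chain rule and the standard identity $\tfrac{d}{dz}\,{}_1F_1(a,b,z) = \tfrac{a}{b}\,{}_1F_1(a+1, b+1, z)$ give
\[
\frac{\partial}{\partial z_1}\,{}_1F_1\!\left(\tfrac{1}{2}, 1, \diag(z_1, z_2)\right) = \tfrac{1}{2}\, e^{z_2}\, {}_1F_1\!\left(\tfrac{3}{2}, 2, z_1 - z_2\right).
\]
For $z_2$, the product rule on the same identity yields the analogous expression involving ${}_1F_1(1/2, 1, z_1 - z_2)$ and ${}_1F_1(3/2, 2, z_1 - z_2)$, both of scalar argument, completing the reduction.

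The argument is essentially a calculus exercise, so there is no serious obstacle. The only subtle point is justifying the integral representation and the differentiation under the integral sign; both follow from the fact that the integrand is smooth and the domain is compact, so dominated convergence applies trivially. A minor aesthetic choice is whether to handle $\partial/\partial z_2$ by symmetry of ${}_1F_1$ in its matrix eigenvalues (via the zonal-polynomial series or the change of variables $\theta \mapsto \theta + \pi/2$ in the integral), or by direct differentiation; I would choose the direct route for transparency.
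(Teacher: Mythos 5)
Your proof is correct and follows essentially the same route as the paper: both rest on the identity ${}_1F_1\left(\tfrac{1}{2},1,\mathrm{diag}(z_1,z_2)\right)=e^{z_2}\,{}_1F_1\left(\tfrac{1}{2},1,z_1-z_2\right)$ together with $\tfrac{d}{dz}\,{}_1F_1(a,b,z)=\tfrac{a}{b}\,{}_1F_1(a+1,b+1,z)$, followed by the chain rule for $z_1$ and the product rule for $z_2$. The only difference is that you additionally derive the first identity from the integral representation over the circle, which the paper simply invokes.
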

\begin{proof}
See \ref{sec:proofpartialderivatives}.
\end{proof}

\subsection{Composition}
Now, we want to derive the composition of Bingham distributed random variables, which is the directional analogue to adding random variables. This operation can, for example, be used to disturb an uncertain Bingham-distributed system state with Bingham-distributed noise, similar to using a convolution to disturb a probability distribution on $\mathbb{R}$ with additive noise. First, we define a composition of individual points on the hypersphere $S_{d-1}$, which we then use to derive the composition of Bingham distributions.


The composition of two Bingham distributions depends on the interpretation of the unit vectors, for example as complex numbers or quaternions. We assume that a composition function
\[ \oplus: S_{d-1} \times S_{d-1} \to S_{d-1} \]
is given. The function $\oplus$ has to be compatible with $180^\circ$ degree symmetry, i.e., 
\[ \pm(x \oplus y) = \pm( (-x) \oplus y ) = \pm( x \oplus (-y) ) = \pm( (-x) \oplus (-y) ) \]
for all $x,y \in S_{d-1}$. Furthermore, we require the quotient $(S_{d-1} / \{ \pm 1 \}, \oplus)$ to have an algebraic group structure. This guarantees associativity, the existence of an identity element, and the existence of inverse elements.

In the complex case, we interpret $S_1 \subset \mathbb{R}^2$ as unit vectors in $\mathbb{C}$, where the first dimension is the real part and the second dimension the imaginary part. In this interpretation, the Bingham distributions can be understood as a distribution on a subset of the complex plane, namely the unit circle.
\begin{definition}
The composition function $\oplus$ is defined to be complex multiplication, i.e.,
\[ \begin{pmatrix} x_1 \\ x_2\end{pmatrix} \oplus \begin{pmatrix} y_ 1\\ y_2\end{pmatrix} = \begin{pmatrix} x_1y_1 - x_2 y_2\\ x_1 y_2 + x_2 y_1 \end{pmatrix} \]
analogous to
\[ (x_1 + i x_2) \cdot (y_1 + i y_2) = (x_1 y_1 - x_2 y_2) + i (x_1 y_2 + x_2 y_1) \ . \]
\end{definition}
Since we only consider unit vectors, the composition $\oplus$ is equivalent to adding the angles of both complex numbers when they are represented in polar form. The identity element is $\pm 1$ and the inverse element for $(x_1,x_2)$ is the complex conjugate $\pm(x_1,-x_2)$.

Unfortunately, the Bingham distribution is not closed under this kind of composition. That is, the resulting random vector is not Bingham distributed. Thus, we propose a technique to approximate a Bingham distribution to the composed random vector. The composition of two Bingham distributions $f_\mat{A}$ and $f_\mat{B}$ is calculated by considering the composition of their covariance matrices $\mat{A}, \mat{B}$ and estimating the parameters of $f_\mat{C}$ based on the resulting covariance matrix. Composition of covariance matrices can be derived from the composition of random vectors.

\begin{lemma}
Let $f_\mat{A}$ and $f_\mat{B}$ be Bingham distributions with covariance matrices \label{lemma:composition} 
\[\mat{A} = \begin{pmatrix}a_{11} & a_{12} \\ * & a_{22}\end{pmatrix} \text{ and } \mat{B}=\begin{pmatrix}b_{11} & b_{12} \\ * & b_{22}\end{pmatrix}\ , \]
respectively. Let $\vec{x},\vec{y} \in S_1 \subset \mathbb{R}^2$ be independent random vectors distributed according to $f_\mat{A}$ and $f_\mat{B}$. Then the covariance
 \[ \mat{C} =  \begin{pmatrix}c_{11} & c_{12} \\ * & c_{22}\end{pmatrix} :=  \Cov (\vec{x} \oplus \vec{y}) \]
of the composition is given by
\begin{align*}
c_{11} =& a_{11} b_{11} - 2 a_{12} b_{12} + a_{22} b_{22}\ , \\
c_{12} =& a_{11} b_{12} - a_{12} b_{22} + a_{12} b_{11} - a_{22} b_{12}\ , \\
c_{22} =& a_{11} b_{22} + 2 a_{12} b_{12} + a_{22} b_{11}\ .\\
\end{align*}
\end{lemma}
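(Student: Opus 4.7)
The plan is to compute the covariance entries of $\vec{z} := \vec{x} \oplus \vec{y} = (x_1 y_1 - x_2 y_2,\, x_1 y_2 + x_2 y_1)^T$ directly from the definition. The first step is to observe that because the Bingham density satisfies $f(\vec{x}) = f(-\vec{x})$, every Bingham-distributed random vector has zero mean, so $\E(\vec{x}) = \E(\vec{y}) = \vec{0}$ and hence $\mat{A} = \E(\vec{x}\vec{x}^T)$, $\mat{B} = \E(\vec{y}\vec{y}^T)$. The same antipodal symmetry transfers to the composition: replacing $\vec{x}$ by $-\vec{x}$ sends $\vec{z}$ to $-\vec{z}$ without changing the joint distribution, so $\E(\vec{z}) = \vec{0}$ and $\mat{C} = \E(\vec{z}\vec{z}^T)$. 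This reduces the problem to computing three expectations, $\E(z_1^2)$, $\E(z_2^2)$, $\E(z_1 z_2)$.

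Next, I would exploit the independence of $\vec{x}$ and $\vec{y}$ to factor each of these expectations. Expanding $z_1^2 = (x_1 y_1 - x_2 y_2)^2$ gives three monomials of the form (quadratic in $\vec{x}$)(quadratic in $\vec{y}$), each of which splits into a product of entries of $\mat{A}$ and $\mat{B}$; substituting $\E(x_i x_j) = a_{ij}$ and $\E(y_i y_j) = b_{ij}$ and collecting terms yields $c_{11} = a_{11} b_{11} - 2 a_{12} b_{12} + a_{22} b_{22}$. The identity for $c_{22}$ follows identically from expanding $z_2^2 = (x_1 y_2 + x_2 y_1)^2$. For $c_{12}$, I would expand
\[
z_1 z_2 = (x_1 y_1 - x_2 y_2)(x_1 y_2 + x_2 y_1) = x_1^2 y_1 y_2 + x_1 x_2 y_1^2 - x_1 x_2 y_2^2 - x_2^2 y_1 y_2,
\]
take expectations term by term using independence, and collect to obtain the claimed formula.

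None of the steps are really an obstacle; the only subtle point worth stating explicitly is the vanishing of the mean of $\vec{z}$, which is what allows $\Cov(\vec{z})$ to be identified with $\E(\vec{z}\vec{z}^T)$ and therefore written purely in terms of the second moments of $\vec{x}$ and $\vec{y}$. Everything else is a bookkeeping exercise of expanding two quadratic forms and applying independence. I would present the proof compactly: first the mean-zero reduction, then a single display that expands each of $z_1^2$, $z_2^2$, $z_1 z_2$ and reads off the three entries of $\mat{C}$.
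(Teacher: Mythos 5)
Your proposal is correct and follows essentially the same route as the paper's proof: expand the three second moments of $\vec{x}\oplus\vec{y}$, factor using independence, and use the antipodal symmetry of the Bingham distribution to eliminate the mean terms. The only cosmetic difference is that you establish $\E(\vec{x})=\E(\vec{y})=\E(\vec{z})=\vec{0}$ up front and work with $\E(\vec{z}\vec{z}^T)$ directly, whereas the paper carries the $(\E(\cdot))^2$ terms through the variance formula and kills them at the end.
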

\begin{proof}
See \ref{sec:proofcomposition}.
\end{proof}

Based on $\mat{C}$, maximum likelihood estimation is used to obtain the parameters $\mat{M}$ and $\mat{Z}$ of the uniquely defined Bingham distribution with covariance $\mat{C}$ as described above. This computation can be done in an efficient way, because the solution of the equation involving the hypergeometric function is the only part which is not given in closed form. This does not present a limitation to the proposed algorithm, because there are many efficient ways for the computation of the confluent hypergeometric function of a scalar argument \cite{luke1977, muller2001}.

\section{Filter Implementation} \label{sec:filter}
\noindent
The techniques presented in the preceding section can be applied to derive a filter based on the Bingham distribution. The system model is given by
\[ \vec{x}_{k+1} = \vec{x}_k \oplus \vec{w}_k\ ,\]
where $\vec{w}_k$ is Bingham distributed noise. The measurement model is given by
\[ \vec{z}_k = \vec{x}_k \oplus \vec{v}_k\ ,\]
where $\vec{v}_k$ is Bingham distributed noise and $\vec{x}_k$ is an uncertain Bingham distributed system state. Intuitively, this means that both system and measurement model are the identity disturbed by Bingham distributed noise. Note that $\vec{w}_k$ and $\vec{v}_k$ can include a constant offset. For example $\vec{w}_k$ could include a known angular velocity. Alternatively, to avoid dealing with biased noise distributions, a rotation may be applied to $\vec{x}_k$ first and unbiased noise added subsequently.

The predicted and estimated distributions at time $k$ are described by their parameter matrices $(\mat{M}_k^p, \mat{Z}_k^p)$ and $(\mat{M}_k^e, \mat{Z}_k^e)$ respectively. The noise distributions at time $k$ are described by $(\mat{M}_k^w, \mat{Z}_k^w)$ and $(\mat{M}_k^v,\mat{Z}_k^v)$.

\subsection{Prediction}
The prediction can be calculated according to
\[ (\mat{M}_{k+1}^p, \mat{Z}_{k+1}^p) = \text{composition}((\mat{M}_k^e, \mat{Z}_k^e),(\mat{M}_k^w, \mat{Z}_k^w)) \ , \]
which uses the previously introduced composition operation to disturb the estimate with the system noise.

\begin{algorithm}[tb]
\KwIn{estimate $\mat{M}_k^e, \mat{Z}_k^e$, noise $\mat{M}_k^w, \mat{Z}_k^w$}
\vspace{4mm}
\KwOut{prediction $\mat{M}_{k+1}^p, \mat{Z}_{k+1}^p$}
\vspace{4mm}
\tcc{calculate covariance matrices $\mat{A},\mat{B}$}
$\mat{A} = \sum_{i=1}^d \frac{1}{F} \frac{\partial F}{\partial z_i} \mat{M}_k^e(:,i) \mat{M}_k^e(:,i)^T$\; 
$\mat{B} = \sum_{i=1}^d \frac{1}{F} \frac{\partial F}{\partial z_i} \mat{M}_k^w(:,i) \mat{M}_k^w(:,i)^T$\;
\tcc{calculate $\mat{C}$ according to Lemma \ref{lemma:composition}}
$c_{11} = a_{11} b_{11} - 2 a_{12} b_{12} + a_{22} b_{22}$\;
$c_{12} = a_{11} b_{12} - a_{22} b_{12} - a_{12} b_{22} + a_{12} b_{11}$\; 
$c_{22} = a_{11} b_{22} + 2 a_{12} b_{12} + a_{22} b_{11}$\;
$\mat{C} = \begin{pmatrix} c_{11} & c_{12} \\ c_{12} & c_{22} \end{pmatrix}$\;
\tcc{calculate $\mat{M}_{k+1}^p, \mat{Z}_{k+1}^p$ based on $\mat{C}$}
$\mat{M}_{k+1}^p, \mat{Z}_{k+1}^p \gets \text{MLE}(\mat{C})$\; 
\label{algo:predict}
\caption{Prediction}
\end{algorithm}

\subsection{Update}
Given a measurement $\hat{\vec{z}}$, we can calculate the updated distribution according to Bayes' rule
\[ f(\mat{M}_k, \mat{Z}_k | \hat{\vec{z}}) = c \cdot f(\hat{\vec{z}} | \mat{M}_k, \mat{Z}_k) \cdot f(\mat{M}_k, \mat{Z}_k) \]
with some normalization constant $c$, which yields the update procedure
\[ (\mat{M}_k^e, \mat{Z}_k^e) = \text{multiply}((\mat{M}, \mat{Z}_k^e), (\mat{M}_k^p, \mat{Z}_k^p)) \]
with $\mat{M} = (\bar{\hat{\vec{z}}} \oplus \mat{M}_k^v) $, where $\bar{\vec{a}}$ indicates the complex conjugate of $\vec{a}$ and $\oplus$ is evaluated for each column of $\mat{M}_k^v$.

\begin{algorithm}[tb]
\KwIn{prediction $\mat{M}_k^p, \mat{Z}_k^p$, noise $\mat{M}_k^v, \mat{Z}_k^v$, measurement $\hat{\vec{z}}_k$}
\vspace{4mm}
\KwOut{estimate $\mat{M}_k^e, \mat{Z}_k^e$}
\vspace{4mm}
\tcc{rotate noise according to measurement}
$\mat{M} \gets \begin{pmatrix} \bar{\hat{\vec{z}}} \oplus \mat{M}_k^v \end{pmatrix}$\; 
\tcc{multiply with prior distribution}
$(\mat{M}_k^e, \mat{Z}_k^e) \gets \text{multiply}((\mat{M}, \mat{Z}_k^v)), (\mat{M}_k^p, \mat{Z}_k^p))$\;
\label{algo:update}
\caption{Update}
\end{algorithm}

\section{Evaluation} \label{sec:evaluation}
\noindent
The proposed filter was evaluated in simulations. In this section, all angles are given in radians unless specified differently.

For comparison, we implemented a one-dimensional Kalman filter \cite{kalman1960}. A traditional one-dimensional Kalman filter has two issues when confronted with our situation. First, it does not take the circular nature of the problem into account. Second, it does not handle $180^\circ$ symmetry. We can circumvent both issues by restricting the estimate $x_k$ according to $0 \leq x_k \leq \pi$ and by shifting the measurement, so that $|x_k - \hat{z}_k| \leq \frac{\pi}{2}$ is satisfied.

%
%

In our example, we consider the estimation of an axis in robotics. This could be the axis of a symmetric rotor blade  or any other robotic joint with $180^\circ$ symmetry. We use the initial estimate with mode $(0,1)^T$
\[ \mat{M}_0^e = \begin{pmatrix} 1 & 0 \\ 0 & 1 \end{pmatrix}, \quad \mat{Z}_0^e = \begin{pmatrix} -1 & 0 \\ 0  & 0 \end{pmatrix} \ , \]
the system noise with mode $(1,0)^T$
\[ \mat{M}_k^w = \begin{pmatrix} 0 & 1 \\ 1 & 0 \end{pmatrix}, \quad \mat{Z}_k^w = \begin{pmatrix} -200 & 0 \\ 0  & 0 \end{pmatrix} \ , \]
and the measurement noise with mode $(1,0)^T$
\[ \mat{M}_k^v = \begin{pmatrix} 0 & 1 \\ 1 & 0 \end{pmatrix}, \quad \mat{Z}_k^v = \begin{pmatrix} -3 & 0 \\ 0  & 0 \end{pmatrix} \ . \]
The true initial state is given by $(1, 0)^T$, i.\,e., the initial estimate with mode $(0,1)^T$ is very poor. 
The initial estimate for the Kalman filter is given by 
\[ x_0^e = \atan2(\text{mode} (\mat{M}_0^e)) = \atan2(1,0) = \frac{\pi}{2} \]
and the noise means are
\[ \mu_k^w = \mu_k^v = \atan2(0,1) = 0 \ . \]

The covariance matrices for the Kalman filter are obtained by sampling the Bingham noise parameters and calculating the empirical covariance from the samples. This yields
\[ C_0^e = 0.5956, \quad C_k^w = 0.0027, \quad C_k^v=0.2836 \ , \]
which is equivalent to standard deviations of $44^\circ$ for the first time step, $3^\circ$ for the system noise and $30^\circ$ for the measurement noise.

\begin{figure*}[t]
\centering
\begin{subfigure}{0.7\textwidth}
\centering
\includegraphics[width=0.6\textwidth]{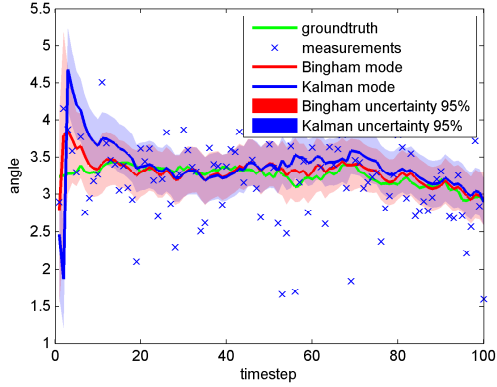}
\caption{Ground truth and estimate.}
\end{subfigure}
\begin{subfigure}{0.7\textwidth}
\centering
\includegraphics[width=0.6\textwidth]{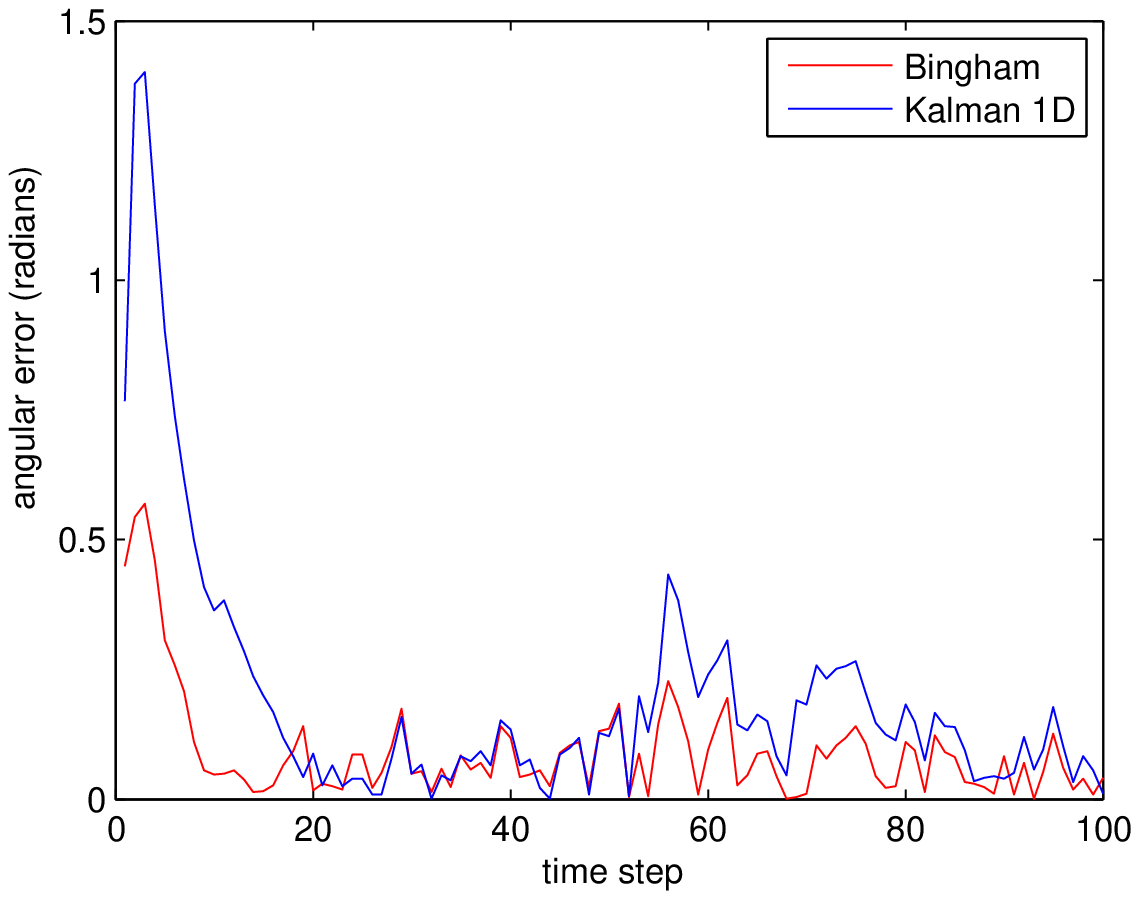}
\caption{Angular error.}
\end{subfigure}
\caption{An example run of Bingham and Kalman filter.}
\label{fig:examplerun}
\end{figure*}

We simulate the system for a duration of $k_{\max} = 100$ time steps. An example run is depicted in Fig. \ref{fig:examplerun}. In addition to the mode of the estimate, we plot the $95 \%$ confidence interval, which is equivalent to the $2 \sigma$ bounds in the case of the Kalman filter.


For evaluation, we consider the angular RMSE which is given by
\[ \sqrt{ \frac{1}{k_{\max}} \sum_{k=1}^{k_{\max}} (e_k)^2 } \]
with angular error 
\[ e_k = \min(\measuredangle (\vec{x}_k^\text{true}, \text{mode}(\mat{M}_k^e)) , \pi - \measuredangle (\vec{x}_k^\text{true}, \text{mode}(\mat{M}_k^e)) \]
at time step $k$. Obviously, $0 \leq e_k \leq \frac{\pi}{2}$ holds, which is consistent with our assumption of $180^\circ$ symmetry.

The presented results are based on 1000 Monte Carlo runs. Even though our filter is computationally more demanding than a Kalman filter, it is still fast enough for real-time applications. On a standard laptop, our non-optimized implementation in MATLAB needs approximately $\SI{60}{\milli\second}$ for one time step (prediction and update), which could be significantly improved by a faster evaluation of the hypergeometric function. 
In Fig. \ref{fig:montecarlo-scatter}, we plot the error of our filter against the error of the Kalman filter for all runs. The proposed filter outperforms the Kalman filter in most cases, which is also true for the mean angular error in every time step as shown in Fig. \ref{fig:montecarlo-meanerror}. In particular, the significantly faster rate of convergence of the proposed filter is evident. This superiority is due to the reasons listed in the introduction. The use of the Gaussian distribution, which does not consider the problem geometry leads to suboptimal results compared to the proposed approach based on the Bingham distribution. 

In Fig. \ref{fig:montecarlo-meanerror}, we also show a comparison with a filter based on the wrapped normal distribution (denoted WN), which we previously published in \cite{kurz2013} and modified for the $180^\circ$ case. The angular error of both filters is almost indistinguishable. However, unlike the proposed filter based on the Bingham distribution, the previously published filter cannot easily be generalized to higher dimensions.


\begin{figure*}[t]
\centering
\begin{subfigure}{0.45\textwidth}
\centering
\includegraphics[height=5cm]{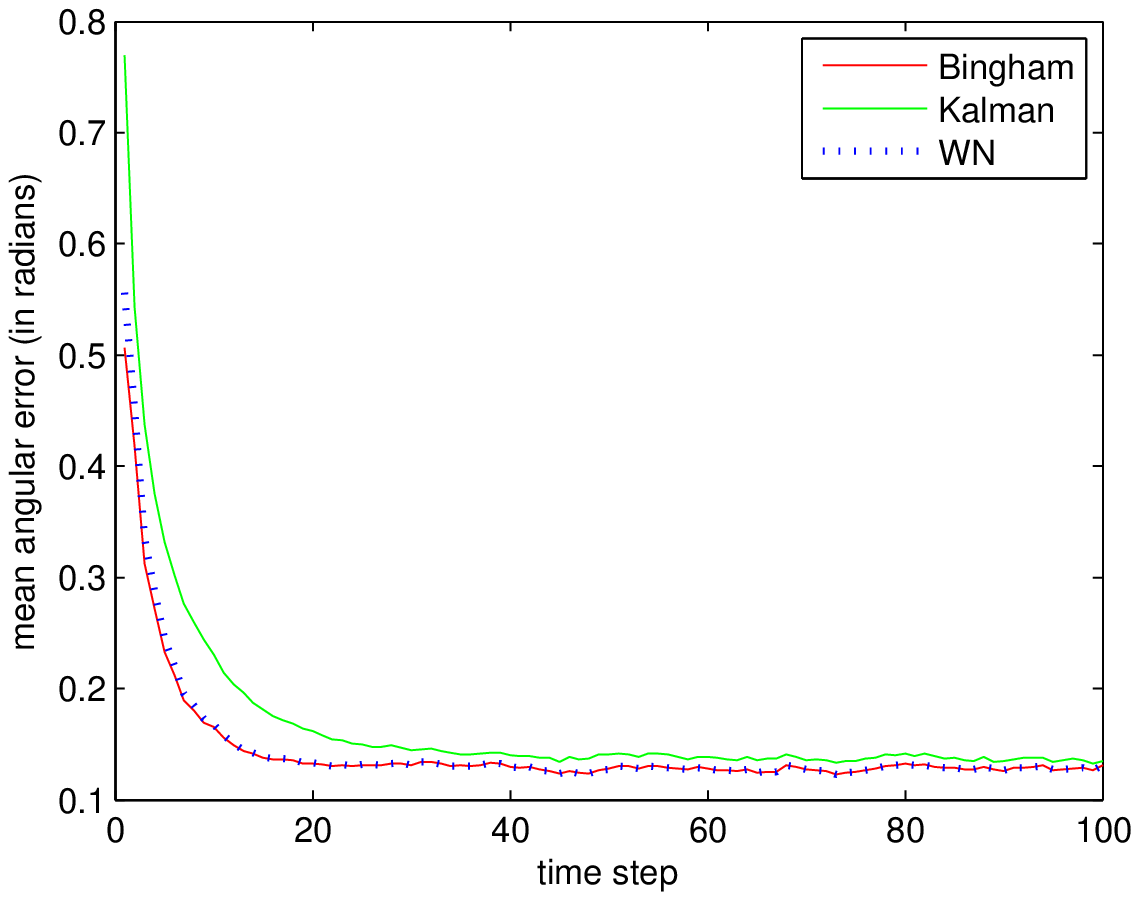}
\caption{Time steps $1 \leq k \leq 100$.}
\end{subfigure}
\begin{subfigure}{0.45\textwidth}
\centering
\includegraphics[height=5cm]{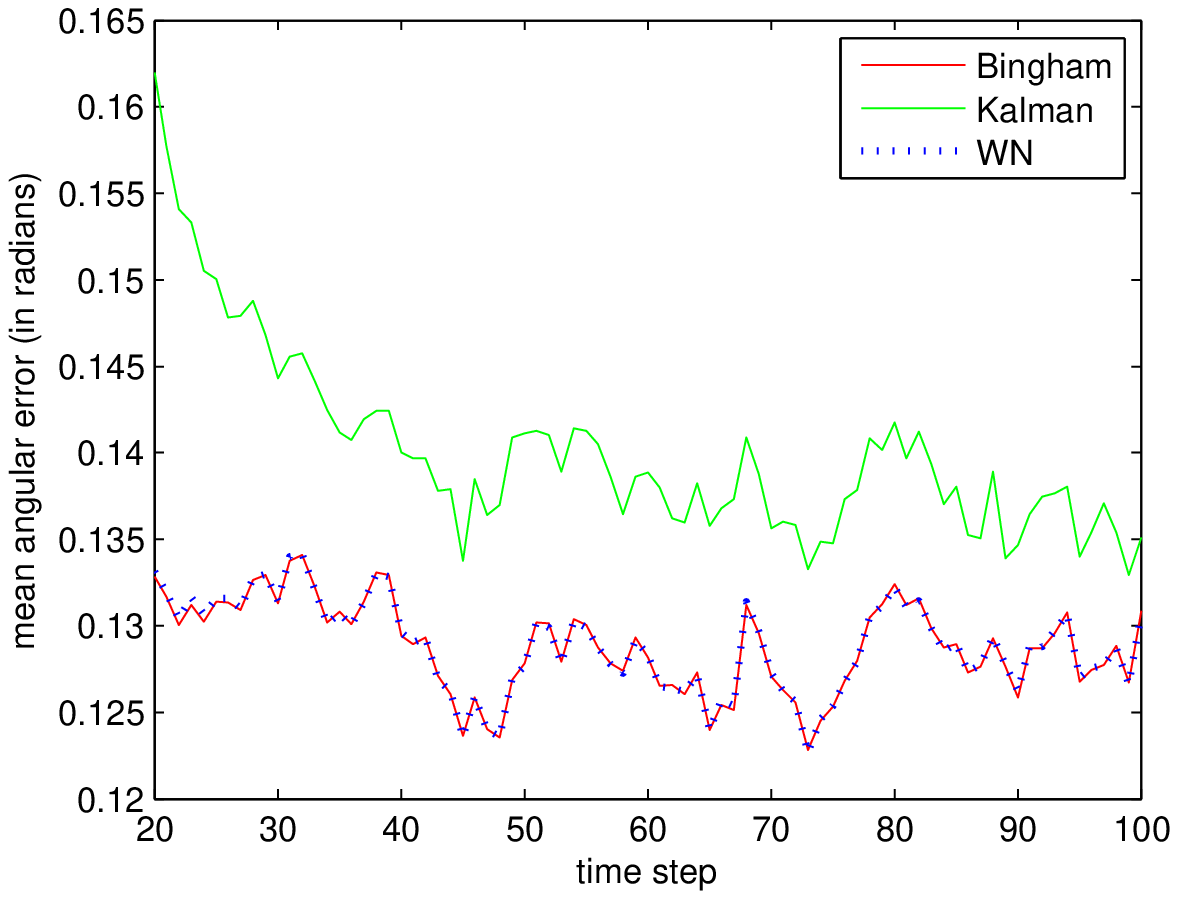}
\caption{Time steps $20 \leq k \leq 100$.}
\end{subfigure}
\caption{Results of 1000 Monte Carlo runs. We show the mean error at every time step across all runs for the proposed Bingham filter, a Kalman \cite{kalman1960} filter and a filter based on the wrapped normal (WN) distribution \cite{kurz2013}. Because the initial error is large as a result of the poor initial estimate, we show two plots of different time intervals.}
\label{fig:montecarlo-meanerror}
\end{figure*}

\begin{figure}[t]
\centering
\includegraphics[width=0.6\textwidth]{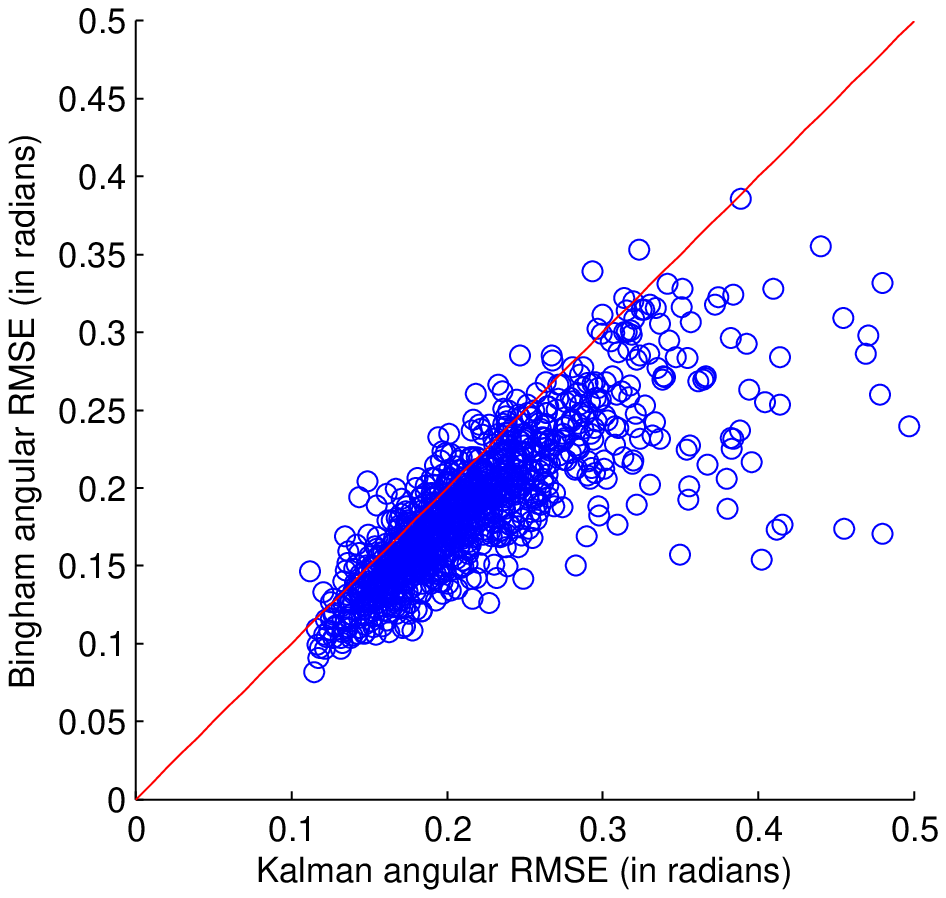}
\caption{Results of 1000 Monte Carlo runs. Each sample represents one run. Samples below the red line indicate that the proposed filter has performed better, samples above the red line indicate that the Kalman filter has performed better.}
\label{fig:montecarlo-scatter}
\end{figure}


\section{Conclusion} \label{sec:conclusion}
\noindent
We have presented a recursive filter based on the Bingham distribution. It can be applied to circular estimation problems with $180^\circ$ symmetry. Our simulations have shown the superiority of the presented approach compared to the traditional solution of modifying a Kalman filter for the circular setting.

Future work will focus on recursive 3D pose estimation using Bingham distribution. This can be achieved by applying the presented methods in the four-dimensional case for estimating quaternions. Open challenges include an efficient estimator of the Bingham parameters based on available data. This makes an efficient evaluation of the confluent hypergeometric function necessary.

\section*{Acknowledgment} \noindent
This work was partially supported by grants from the German Research Foundation
(DFG) within the Research Training Groups RTG 1194 ``Self-organizing
Sensor-Actuator-Networks'' and RTG 1126 ``Soft-tissue Surgery: New
Computer-based Methods for the Future Workplace''.


\begin{appendix}

\section{Proof of Lemma \ref{lemma:partialderivatives}.}
\label{sec:proofpartialderivatives}
\noindent
We use the identities
\[ {}_1F_1 \left(\frac{1}{2}, 1, \begin{pmatrix} z_1 & 0 \\ 0 & z_2 \end{pmatrix} \right) =  \exp(z_2) \cdot {}_1F_1 \left( \frac{1}{2},1,z_1-z_2 \right) \]
and
\[ \frac{\partial}{\partial z} {}_1F_1 (a,b,z) = \frac{a}{b} {}_1F_1 (a+1, b+1, z) \ . \]
This yields
\begin{align*}
\frac{\partial}{\partial z_1} \frac{F}{|S_{d-1}|}
& = \frac{\partial}{\partial z_1} {}_1F_1 \left( \frac{1}{2}, 1, \begin{pmatrix} z_1 & 0 \\ 0 & z_2 \end{pmatrix}\right)\\ 
& = \exp(z_2) \frac{\partial}{\partial z_1} {}_1F_1 \left( \frac{1}{2}, 1, z1-z2 \right) \\ 
& = \exp(z_2) \frac{1}{2} {}_1F_1 \left( \frac{3}{2}, 2, z1-z2 \right)
\end{align*}
and
\begin{align*}
\frac{\partial}{\partial z_2} \frac{F}{|S_{d-1}|}
= & \frac{\partial}{\partial z_2} {}_1F_1 \left( \frac{1}{2}, 1, \begin{pmatrix} z_1 & 0 \\ 0 & z_2 \end{pmatrix}\right) \\
= &\frac{\partial}{\partial z_2} \left( \exp(z_2) {}_1F_1 \left( \frac{1}{2}, 1, z_1 - z_2 \right) \right) \\
= &\exp(z_2) {}_1F_1 \left( \frac{1}{2}, 1, z_1 - z_2 \right) + \exp(z_2) \frac{\partial}{\partial z_2} {}_1F_1 \left( \frac{1}{2}, 1, z_1 - z_2 \right) \\
= &\exp(z_2) {}_1F_1 \left( \frac{1}{2}, 1, z_1 - z_2 \right) - \exp(z_2) \frac{1}{2} {}_1F_1 \left( \frac{3}{2}, 2, z_1 - z_2 \right) \\
= &\exp(z_2) \Bigg( {}_1F_1 \left( \frac{1}{2}, 1, z_1 - z_2 \right) - \frac{1}{2} {}_1F_1 \left( \frac{3}{2}, 2, z_1 - z_2 \right) \Bigg)
\end{align*}
\hfill\qedsymbol

\section{Proof of Lemma \ref{lemma:composition}.}
\label{sec:proofcomposition}
\noindent

The covariance of the composition
\begin{align*}
\mat{C} =& \Cov (\vec{x} \oplus \vec{y}) \\
=& \Cov \left( \begin{pmatrix} x_1y_1 - x_2 y_2\\ x_1 y_2 + x_2 y_1 \end{pmatrix} \right) \\
=& \begin{pmatrix}
\Var (x_1y_1 - x_2 y_2) & \Cov(x_1y_1 - x_2 y_2,  x_1 y_2 + x_2 y_1)\\
* & \Var( x_1 y_2 + x_2 y_1)
\end{pmatrix}
\end{align*}
can be obtained by calculating the matrix entries individually. For the first entry we get
\begin{align}
c_{11} =& \Var (x_1y_1 - x_2 y_2) \notag \\
=& \E ((x_1y_1 - x_2 y_2)^2) - (\E(x_1y_1 - x_2 y_2))^2 \notag \\
=& E(x_1^2y_1^2 - 2 x_1y_1 x_2y_2 + x_2^2y_2^2) 
- (\E(x_1y_1) - \E(x_2y_2))^2 \label{eq:independence} \\
=& \E(x_1^2)\E(y_1^2) - 2 \E(x_1x_2)\E(y_1y_2) + \E(x_2^2)\E(y_2^2) \label{eq:linearity} \\
&- (\underbrace{\E(x_1)}_{0} \underbrace{\E(y_1)}_{0} - \underbrace{\E(x_2)}_{0} \underbrace{\E(y_2)}_{0})^2 \label{eq:zero} \\
=& a_{11} b_{11} - 2 a_{12} b_{12} + a_{22} b_{22} \notag .
\end{align}
We use independence of $\vec{x}$ and $\vec{y}$ in (\ref{eq:independence}), linearity of the expectation value in (\ref{eq:linearity}) and symmetry of the Bingham in (\ref{eq:zero}).
Analogously we calculate
\begin{align*}
c_{22} =& a_{11} b_{22} + 2 a_{12} b_{12} + a_{22} b_{11} \ .
\end{align*}
The off-diagonal entry can be calculated similarly
\begin{align*}
c_{12} =& \Cov(x_1y_1 - x_2 y_2,  x_1 y_2 + x_2 y_1) \\
=& \E((x_1y_1 - x_2 y_2) \cdot (x_1 y_2 + x_2 y_1)) 
- \E(x_1y_1 - x_2 y_2)\cdot \E(x_1 y_2 + x_2 y_1)\\
=& \E (x_1^2y_1y_2 - x_1x_2y_2^2 + x_1x_2y_1^2 - x_2^2y_1y_2 )
- (\E(x_1)\E(y_1) - \E(x_2) \E(y_2)) \\ 
&\cdot (\E(x_1) \E( y_2) + \E(x_2) \E(y_1)) \\
=& a_{11} b_{12} - a_{12} b_{22} + a_{12} b_{11} - a_{22} b_{12} \ .
\end{align*}
\hfill\qedsymbol

\end{appendix}
    

\bibliographystyle{ieeetemplates/IEEEtran}
\bibliography{literature-ig,gk-bingham}


\end{document}